\DeclareMathOperator{\Tr}{Tr}
\DeclareMathOperator{\diagV}{diagV}
\DeclareMathOperator{\diagM}{diagM}
\title{\LARGE \bf

Graph-Based Dynamics and Network Control of a\\ Single Articulated Robotic System


}
\author{Jonathan Lane$^{1}$ and Nak-seung Patrick Hyun$^{1}$
\thanks{$^{1}$School of Electrical and Computer Engineering, Purdue University, West Lafayette, IN 47907, USA, {\tt\small \{jglane,nhyun\}@purdue.edu}}%
}
\begin{document}

\maketitle
\thispagestyle{empty}
\pagestyle{empty}

\theoremstyle{plain}
\newtheorem{theorem}{Theorem}
\newtheorem{corollary}{Corollary}
\newtheorem{prop}{Proposition}
\newtheorem{lemma}{Lemma}
\newtheorem{assumption}{Assumption}

\theoremstyle{definition}
\newtheorem{definition}{Definition}
\newtheorem{remark}{Remark}
\newtheorem{prob}{Problem}

\begin{abstract}
Extensive research on graph-based dynamics and control of multi-agent systems has successfully demonstrated control of robotic swarms, where each robot is perceived as an independent agent virtually connected by a network topology. The strong advantage of the network control structure lies in the decentralized nature of the control action, which only requires the knowledge of virtually connected agents. In this paper, we seek to expand the ideas of virtual network constraints to physical constraints on a class of tree-structured robots which we denote as single articulated robotic (SAR) systems. In our proposed framework, each link can be viewed as an agent, and each holonomic constraint connecting links serves as an edge. By following the first principles of Lagrangian dynamics, we derive a consensus-like matrix-differential equation with weighted graph and edge Laplacians for the dynamics of a SAR system. The sufficient condition for the holonomic constraint forces becoming independent to the control inputs is derived. This condition leads to a decentralized leader-follower network control framework for regulating the relative configuration of the robot. Simulation results demonstrate the effectiveness of the proposed control method.

\end{abstract}

\begin{keywords}
Robotics, Networked control systems, Agents-based systems
\end{keywords}

\section{Introduction}
Network control theory has been used extensively to model the dynamics and derive control methods for multi-agent systems across many fields of study, such as the formation control of a drone swarm \cite{anderson2008uav}. Furthermore, the formation control methods for multi-agent systems are decentralized in the sense that each agent only uses state information of neighboring agents to decide its control policy. 
For example, in the leader-follower approach to formation control, a leader agent indirectly guides the followers to achieve a cooperative goal (similar to how ducklings follow each other and collectively follow their mother). 
This method can be used to control a swarm of mobile robots or UAV's to a rigid formation governed by a set of desired relative distances between agents \cite{desai1998controlling}. In these traditional applications of multi-agent robotics, the network describes \textit{virtual} constraints acting on each agent rather than \textit{physical} constraints (i.e. holonomic constraints which reduce the degrees of freedom).
In this paper, we study a new perspective on modeling the dynamics of a single articulated robotic (SAR) system with multiple links and joints as a physically constrained network.
The SAR system on a plane is formally defined in Section~\ref{sec:SAR}.


The Kuramoto model applied to a system of coupled metronomes is an example of a network whose connections are physical \cite{boda2013rhythm}. The model represents the phase dynamics of each metronome based on its physical interactions with the other metronomes using the \textit{incidence matrix}, which is a description of the directed graph \cite{jadbabaie2004stability}.
However, the phase model is a first-order approximation of the network interactions and is not sufficient to describe the full nonlinear dynamics of coupled metronomes \cite{ulrichs2009synchronization}.
Conversely, many articulated robots contain interconnections between neighboring links which indicate an underlying graph. For example, DRAGON is an open chain modular flyer that consists of four rigid links with multidirectional thrusters connected by universal joints \cite{zhao2018dragon}. However, the dynamics presented in \cite{zhao2018dragon} were not derived using a network structure. The physical constraints of open-chain robots can provide insight for an underlying graph structure and control from a network perspective, leading to a generalization to more complicated open chains.



Traditionally, graph-based representations have been used to derive the dynamics of generalized robotic mechanisms. The vector-network method is a procedure to generate the equations of motion for a dynamic system from a graph representation of the system using only the \textit{incidence matrix} of the graph
and a set of equations that describe the mechanics of each link \cite{andrews1988general} \cite{mcphee1996use}.
Similarly, the dynamics equations of a tree-structured robot can be derived using the incidence matrix of a graph where the nodes represent mass elements and the edges represent joints \cite{wittenburg1994topological}. However, the constraint forces on the joints are expressed as being independent to external forces, which neglects the general structural dependencies of the control forces based on the graph properties in \cite{wittenburg1994topological}. 
Furthermore, a recursive formulation of the dynamics for open chain manipulators can be derived by considering the net wrench acting on each body as the sum of wrenches from bodies down the chain \cite{park1995lie}. However, the recursive algorithm does not explicitly demonstrate properties of the graph (i.e. incidence matrix, graph Laplacian). 
The above methods have been used extensively to derive the equations of motion of dynamical systems with an algorithm and are not focused on control, especially through the lens of distributed control within a single connected articulated robot.
In this paper, we derive graph-based dynamics for SAR systems which interpret the network coupling between coordinates and forces acting on each joint with respect to holonomic constraints.

The key concept of representing the dynamics of SAR systems purely based on an underlying graph structure lies in the choice of a non-minimal representation of the system \cite{andrews1988general} \cite{wittenburg1994topological}.
The dynamics of a generalized networked mass-spring-damper system has been derived on the spatial coordinates of the masses as a port-Hamiltonian system using a \textit{weighted graph Laplacian} (defined in Section~\ref{sec:prelim}) \cite{van2013port}. Additionally, the consensus equation for the underlying edge-weighted graph of a mass-damper system is shown to be equivalent to the system's generalized momentum equation. The choice of coordinates is similar to those in this paper, however the holonomic constrained Hamiltonian is not considered because the mass elements are not rigidly linked, leading to a minimal representation of the configuration space.
On the other hand, expressing the coupling within the dynamics of a SAR system with a minimal representation can easily break the underlying network (see the motivating example in Section~\ref{sec:motivatin_example}).
By over-parameterizing the configuration space of a SAR system with the inertial coordinates of the centers of masses of each link, 
we can define a graph where the mass elements represent nodes and the relative positions between two nodes represent edges subject to holonomic constraints (i.e. length constraints of rigid links).
In this paper, we propose a formulation of the dynamics equations of a tree-structured robotic system as a matrix-differential equation using a weighted graph Laplacian and weighted edge Laplacian (see definitions in Section~\ref{sec:prelim}).
Along with the forms of the dynamics equations, we present a method to control the relative coordinates between the links to a set of desired relative coordinates similar to a leader-follower controller used in multi-agent formation control literature 
\cite{dai2020adaptive} \cite{pan2021multilayer}.


\section{Preliminaries}
\label{sec:prelim}
\subsection{Graph theory}
In this section, we will define some preliminary concepts of graph theory needed for this paper. For a more detailed introduction to graph theory see \cite{mesbahi2010graph}. A \textit{directed graph}, or \textit{digraph}, denoted $\mathcal{G}:=(V_\mathcal{G}, E_\mathcal{G})$ is defined as the pair of a node set $V_\mathcal{G}$ and an edge set $E_\mathcal{G}$. The node set is a set of $n$ nodes $V_\mathcal{G}:=\{v_1,\hdots,v_n\}$ where $v_i$ represents the $i$th node for $i\in\{1,\hdots,n\}$. The edge set $E_\mathcal{G}\subset V_\mathcal{G}\times V_\mathcal{G}$ is a binary relation with $k$ elements such that if $e_j\in E_\mathcal{G}$, there exists $v_l,v_m\in N_{\mathcal{G}}$ such that $e_j=(v_l,v_m)$ is the $j$th edge in $\mathcal{G}$. For an edge $(v_l,v_m)\in E_\mathcal{G}$, $v_l$ is called the tail of the edge and $v_m$ is called the head.
When an edge exists between two nodes, the nodes are called \textit{adjacent} to each other and \textit{incident} to that edge. 

\begin{definition}[Definition 3.7 in \cite{mesbahi2010graph}]
A digraph $\mathcal{G}$ is an \textit{arborescence} (rooted out-branching tree) if it does not contain a directed cycle (closed loop) and contains a node $v_r$ called the \textit{root} such that for every $i\neq r$, there is exactly one directed path from $v_r$ to $v_i$.
\end{definition}

\begin{lemma} \label{lem:k=n-1}
An arborescence with $n$ nodes has $n-1$ edges.
\end{lemma}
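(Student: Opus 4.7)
The plan is to establish a bijection between the edge set $E_\mathcal{G}$ and the set of non-root nodes $V_\mathcal{G}\setminus\{v_r\}$, via the map that sends each edge to its head. Since $|V_\mathcal{G}\setminus\{v_r\}|=n-1$, this immediately yields the desired count $|E_\mathcal{G}|=n-1$.

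First, I would show the map is well-defined and surjective. For each non-root node $v_i$, the definition of arborescence supplies a unique directed path $v_r\to\cdots\to v_i$. The last edge of this path has head $v_i$, so at least one edge maps to $v_i$.

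Next, I would prove injectivity by showing each non-root node $v_i$ has exactly one incoming edge. Suppose for contradiction there were two distinct incoming edges $(v_a,v_i)$ and $(v_b,v_i)$. By the arborescence property there exist unique directed paths $v_r\to\cdots\to v_a$ and $v_r\to\cdots\to v_b$; appending the respective incoming edges yields two directed paths from $v_r$ to $v_i$. If $v_a\neq v_b$ these paths are clearly distinct, contradicting uniqueness; if $v_a=v_b$ the two edges coincide, contradicting the assumption that they were distinct.

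Finally, I would verify that no edge has head $v_r$, so that the image of the map is exactly $V_\mathcal{G}\setminus\{v_r\}$. If there were an edge $(v_a,v_r)$, concatenating the unique directed path from $v_r$ to $v_a$ with this edge produces a directed cycle through $v_r$, violating the acyclicity clause in the definition of arborescence. The only step that requires care is the injectivity argument, since it relies on interpreting the "exactly one directed path" condition correctly; everything else is a direct unpacking of the definition.
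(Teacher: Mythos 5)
Your proof is correct, and it takes a genuinely different route from the paper: the paper offers no argument of its own, deferring entirely to a citation (Theorem 4.2 of an external graph-theory text, the standard ``a tree on $n$ vertices has $n-1$ edges'' fact), whereas you give a self-contained bijection $E_\mathcal{G}\to V_\mathcal{G}\setminus\{v_r\}$ via the head map, using the unique-path property for surjectivity and acyclicity both to exclude edges into the root and, combined with uniqueness, to exclude two incoming edges at a node. Two small points you could make explicit for completeness: when you append $(v_a,v_i)$ to the path from $v_r$ to $v_a$, note that $v_i$ cannot already lie on that path (otherwise its sub-path from $v_i$ to $v_a$ followed by the edge $(v_a,v_i)$ would be a directed cycle), so the concatenation is indeed a path; and the degenerate case $v_a=v_r$ is covered by taking the trivial path at the root (a self-loop into the root is itself a cycle). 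What your approach buys beyond the citation is that it proves, along the way, exactly the observation the paper states right after the lemma as an inference --- that every node except the root is the head of exactly one edge and the root is the head of none --- so your single argument delivers both the edge count and that structural fact.
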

\begin{proof}
See Theorem 4.2 in \cite{dharwadker2011graph}.
\end{proof}
Lemma~\ref{lem:k=n-1} infers that every node in an arborescence is the head of exactly one edge except the root node.
\begin{definition} \label{def:D}
The \textit{incidence matrix} $D(\mathcal{G})\in\mathbb{R}^{n\times k}$ of a digraph $\mathcal{G}$ is defined as follows. Let $d_{ij}$ be an element in the $i$th row and $j$th column of $D(\mathcal{G})$ representing the edge $e_j=(v_l,v_m)$ for some $j\in\{1,\hdots,k\}$ and $l,m\in\{1,\hdots,n\}$. Then
\begin{equation*}
d_{ij}=\begin{cases}
    1 & \text{if }i=m\\
    -1 & \text{if }i=l\\
    0 & \text{otherwise}.
\end{cases}
\end{equation*}
We denote $d_j\in\mathbb{R}^n$ as the $j$th column of $D(\mathcal{G})$ associated with the $j$th edge of $\mathcal{G}$ where $D(\mathcal{G})=[d_1,\hdots,d_k]$.
\end{definition}
The incidence matrix satisfies the following lemma.
\begin{lemma} \label{lem:1_in_null}
The incidence matrix has $\mathbf{1}_n\in\mathcal{N}(D(\mathcal{G})^\top )$
where $\mathbf{1}_n\in\mathbb{R}^n$ is a vector where every element is $1$ \cite{mesbahi2010graph}.
\end{lemma}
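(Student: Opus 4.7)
The plan is to show directly that $D(\mathcal{G})^\top \mathbf{1}_n = \mathbf{0}_k$ by examining the action of $D(\mathcal{G})^\top$ on $\mathbf{1}_n$ column by column. Since $\mathbf{1}_n$ is the all-ones vector, the $j$th entry of $D(\mathcal{G})^\top \mathbf{1}_n$ is simply the sum of the entries of the $j$th column $d_j$ of $D(\mathcal{G})$.

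Next, I would appeal to Definition~\ref{def:D}. For the edge $e_j = (v_l, v_m)$, the column $d_j$ has precisely one entry equal to $+1$ (in row $m$, the head), one entry equal to $-1$ (in row $l$, the tail), and all remaining entries equal to $0$. Here I would note that since $e_j$ is an ordered pair $(v_l, v_m)$ with $v_l, v_m$ distinct nodes (or, if self-loops were allowed, the two contributions would still cancel in the column sum), the $+1$ and $-1$ occupy different rows, so the column sum is exactly $(+1) + (-1) = 0$.

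Since this argument holds for every $j \in \{1, \dots, k\}$, every entry of $D(\mathcal{G})^\top \mathbf{1}_n$ vanishes, giving $D(\mathcal{G})^\top \mathbf{1}_n = \mathbf{0}_k$, which is exactly the statement $\mathbf{1}_n \in \mathcal{N}(D(\mathcal{G})^\top)$. There is no real obstacle here; the entire content is unpacking the incidence matrix definition, so the only care needed is being explicit that every edge contributes exactly one head and exactly one tail to its column.
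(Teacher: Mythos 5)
Your proof is correct and follows essentially the same route as the paper: the paper's one-line argument is precisely that, by Definition~\ref{def:D}, the entries of each column $d_j$ sum to zero, which is what you unpack in detail. Your version just makes the column-by-column computation of $D(\mathcal{G})^\top \mathbf{1}_n$ explicit.
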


\begin{proof}
It follows from Definition~\ref{def:D} that the sum of the elements of $d_j$ is zero for all $j\in\{1,\hdots,k\}$. 
\end{proof}

The following lemma holds if $\mathcal{G}$ is an arborescence.
\begin{lemma} \label{lem:D_rank}
If $\mathcal{G}$ is an arborescence, $D(\mathcal{G})$ is full rank.
\end{lemma}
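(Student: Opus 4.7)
The plan is to show that the $n-1$ columns of $D(\mathcal{G})$ are linearly independent by exhibiting an $(n-1)\times(n-1)$ submatrix of $D(\mathcal{G})$ whose determinant is $\pm 1$. By Lemma~\ref{lem:k=n-1}, $D(\mathcal{G})\in\mathbb{R}^{n\times(n-1)}$, so full rank means full column rank, which is equivalent to the existence of such an invertible square submatrix.

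First I would exploit the arborescence structure to choose a convenient labeling. Since $\mathcal{G}$ is acyclic and has a unique root $v_r$ with a directed path from $v_r$ to every other node, I would relabel the nodes in a topological order $v_1=v_r,v_2,\ldots,v_n$ such that the parent of $v_i$ (for $i\geq 2$) has a strictly smaller index. By the remark after Lemma~\ref{lem:k=n-1}, every non-root node $v_i$ is the head of exactly one edge, which I would call $e_{i-1}$; this reorders the columns of $D(\mathcal{G})$ so that the $j$th column corresponds to the unique edge pointing into $v_{j+1}$.

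Next I would form the submatrix $\bar D\in\mathbb{R}^{(n-1)\times(n-1)}$ obtained by deleting the row of $D(\mathcal{G})$ corresponding to the root $v_1$. By Definition~\ref{def:D}, column $j$ of $\bar D$ carries a $+1$ in the row indexed by $v_{j+1}$ (which becomes row $j$ after deleting the root row) and a $-1$ in the row of its parent, whose index is strictly less than $j+1$. Hence every off-diagonal nonzero entry of $\bar D$ lies above the diagonal, and every diagonal entry equals $+1$. Therefore $\bar D$ is upper triangular with unit diagonal, so $\det\bar D=1$ and $\bar D$ is invertible, which forces $D(\mathcal{G})$ to have rank $n-1$.

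The only subtle step is justifying that a topological ordering exists and that each non-root node is the head of exactly one edge; both follow from the arborescence definition together with Lemma~\ref{lem:k=n-1}. The rest of the argument is a direct structural observation about the incidence matrix. An alternative I considered is induction on $n$ by repeatedly peeling off a leaf (a node with in-degree one and out-degree zero, which exists in any finite arborescence) and noting that its incoming edge contributes a row with a single $+1$, forcing the corresponding coefficient in any null combination to vanish; this route is equivalent but slightly less direct than the triangularization argument above, so I would present the latter.
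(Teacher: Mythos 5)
Your proof is correct. Note, though, that the paper does not actually prove Lemma~\ref{lem:D_rank} itself: it simply cites Lemma~2.2 of \cite{bapat2010graphs}, where the result is obtained by a rank argument for the incidence matrix of a connected graph (the left null space of $D(\mathcal{G})$ is spanned by indicator vectors of connected components, so the rank is $n-1$, which equals the number of edges of a tree). Your route is genuinely different and fully self-contained: you order the nodes topologically so that every parent precedes its children, use the fact (noted after Lemma~\ref{lem:k=n-1}) that each non-root node is the head of exactly one edge to pair columns with non-root nodes, and then observe that deleting the root row leaves an upper-triangular matrix with unit diagonal, hence an invertible $(n-1)\times(n-1)$ submatrix. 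The bookkeeping is sound: after deleting the root row, the $+1$ of column $j$ sits in position $(j,j)$, and the $-1$ either disappears (parent is the root) or lands in a row of strictly smaller index, so the matrix is indeed unit upper triangular. What your argument buys beyond the citation is an explicit certificate of rank (a square submatrix of determinant $1$), which also exhibits the unimodular structure of the incidence matrix of a tree; what the cited connectivity argument buys is greater generality, since it gives $\mathrm{rank}\,D(\mathcal{G})=n-c$ for any graph with $c$ connected components without needing a special node ordering. Either way, your proposal would serve as a valid stand-alone proof of the lemma.
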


\begin{proof}
See Lemma 2.2 in \cite{bapat2010graphs}.
\end{proof}

Given an arborescence $\mathcal{G}$, we can cut one edge to divide it into two arborescences. 
\begin{definition}
\label{def:tail_comp}
    Let $e_j=(v_l,v_m)$ represent the $j$th edge of an arborescence $\mathcal{G}$ and $\mathcal{G}\backslash e_j$ represent the graph without $e_j$. We denote the sub-tree of $\mathcal{G}$ containing the tail of the cut edge, $v_l$, as the \textit{tail component} of $\mathcal{G}\backslash e_j$. Similarly, we denote the sub-tree of $\mathcal{G}$ containing the head $v_m$, as the \textit{head component} of $\mathcal{G}\backslash e_j$.
\end{definition}

Another representation of the graph $\mathcal{G}$ is the \textit{graph Laplacian}, which is used in numerous literatures on multi-agent systems \cite{mesbahi2010graph}. 
\begin{definition}
The \textit{edge-weighted graph Laplacian} $L_w(\mathcal{G})\in\mathbb{R}^{n\times n}$ of the undirected version of a digraph $\mathcal{G}$ is defined as $L_w(\mathcal{G}):=D(\mathcal{G})W_eD(\mathcal{G})^\top $ where $W_e\in\mathbb{R}^{k\times k}$ is a diagonal matrix of weights assigned to the edges.
\end{definition}


A \textit{node-weighted edge Laplacian} can be defined as a direct analogy to \textit{edge-weighted graph Laplacian} and describes the adjacency between pairs of edges: 
\begin{definition}
\label{def:weighted_edge_laplacian}The \textit{node-weighted edge Laplacian} $L_e(\mathcal{G})\in\mathbb{R}^{k\times k}$ of the undirected version of a digraph $\mathcal{G}$ is defined as $L_e(\mathcal{G}):=D(\mathcal{G})^\top W_nD(\mathcal{G})$ where $W_n\in\mathbb{R}^{n\times n}$ is a diagonal matrix of weights assigned to the nodes.
\end{definition}

\subsection{Nomenclature}
The notation $\odot$ denotes the Hadamard (elementwise) product of two matrices with the same dimensions. Let $\mathbf{e}_j$ be an elementary vector of appropriate dimension with $1$ in the $j$th position. Let the following be an arbitrary vector and matrix for the next two definitions: $v=[v_1,\hdots,v_N]^\top\in\mathbb{R}^N,~A=(a_{ij})\in\mathbb{R}^{N\times N}$ for some natural number $N$.

\begin{definition}
The function, $\diagM:\mathbb{R}^{N}\rightarrow\mathbb{R}^{N\times N}$, forms a diagonal matrix from the elements of the input vector, and $\diagV:\mathbb{R}^{N\times N}\rightarrow\mathbb{R}^N$ extracts the diagonal of a square matrix as a column vector:
$$\diagM(v)=\begin{bmatrix}
    v_{1}\\
    & \ddots\\
    & & v_{N}
\end{bmatrix},~\diagV(A)=\begin{bmatrix}
    a_{11}\\
    \vdots\\
    a_{NN}
\end{bmatrix}.$$
\end{definition}


\section{Single Articulated Robotic (SAR) System } \label{sec:SAR}
In this section we will provide a definition of a class of SAR systems which we will use in this paper. The SAR system consists of a set of $n$ numbered particles on a plane, where $m_i$ denotes the mass of the $i$th particle for $i\in\{1,\hdots,n\}$.
Also, $n-1$ separate pairs of masses are connected together by massless rods of length $\ell_j$ for $j\in\{1,\hdots,n-1\}$ to form a rigid body linkage. 
Also assume the articulated body is open chain, meaning the rods do not form any closed loops which would restrict the motion of any joints. Let's first look at an example of a SAR system.

\subsection{Motivating example}
\label{sec:motivatin_example}
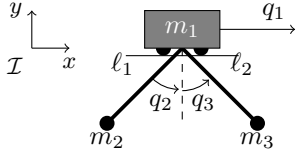
\begin{figure}[htbp]
    \centering
    \begin{tikzpicture}
        \fill[black] (-1, -1) circle (1mm) node[anchor=north] {$m_2$};
        \fill[black] (1, -1) circle (1mm) node[anchor=north] {$m_3$};
        \draw[line width=0.5mm] (0, 0) -- node[anchor=south east] {$\ell_1$} (-1, -1);
        \draw[line width=0.5mm] (0, 0) -- node[anchor=south west] {$\ell_2$} (1, -1);

        \fill (-0.25, 0) circle (0.1);
        \fill (0.25, 0) circle (0.1);
        \filldraw[draw=black, fill=gray] (-0.5, 0) rectangle node[color=white] {$m_1$} (0.5, 0.5);
        \draw (-0.75, -0.1) -- (0.75, -0.1);

        \draw[->] (0.5, 0.25) -- (1.5, 0.25) node[anchor=south east] {$q_1$};
        \draw[dashed] (0, 0) -- node[anchor=north west] {$q_3$} node[anchor=north east] {$q_2$} (0, -1);
        \draw[->] (-0.4, -0.4) arc (225:270:5mm);
        \draw[->] (0, -0.5657) arc (270:315:5mm);

        \draw[->] (-2, 0) node[anchor=north east] {\small $\mathcal{I}$} -- (-1.5, 0) node[anchor=north] {\small $x$};
        \draw[->] (-2, 0) -- (-2, 0.5) node[anchor=east] {\small $y$};
    \end{tikzpicture}
    \caption{Two pendulums on a cart.}
    \label{fig:pendulums}
\end{figure}

Consider the planar system given by Fig.~\ref{fig:pendulums}. The system consists of a cart with mass $m_1$ resting on a platform, and two pendulums of length $\ell_1$ and $\ell_2$ with masses $m_2$ and $m_3$ connected to the cart with revolute joints. Let the generalized coordinates be defined as pictured where $q:=[q_1,q_2,q_3]^\top \in\mathbb{R}^3$, which is a minimal representation of the configuration of the system represented in the inertial frame $\mathcal{I}$.

The generalized mass matrix with respect to the chosen coordinates is denoted as $M(q)\in\mathbb{R}^{3\times 3}$, the Coriolis matrix is denoted as $C(q,\dot{q})\in\mathbb{R}^{3\times 3}$, and the gravitational force and non-conservative forces (e.g. damping on each joint) is denoted as $N(q,\dot{q})\in\mathbb{R}^3$, as defined in \cite{murray2017mathematical}. Then, the dynamics of the system can be expressed using the Euler-Lagrange equation
$\ddot{q} = -M(q)^{-1}(C(q,\dot{q})\dot{q} + N(q,\dot{q}))$.
The mass matrix for this example is written as
$$M(q)=\begin{bmatrix}
    m_1+m_2+m_3 & m_2\ell_1\cos(q_2) & m_3\ell_2\cos(q_3)\\
    m_2\ell_1\cos(q_2) & m_2\ell_1^2 & 0\\
    m_3\ell_2\cos(q_3) & 0 & m_3\ell_2^2
\end{bmatrix}.$$
Notice the sparse structure of the mass matrix, 
which infers that the internal dependencies between each pair of coordinates describes a network of how their accelerations are coupled (zeros in the $(2,3)$ and $(3,2)$ positions indicate the independence between angles $q_2$ and $q_3$).
However, the actual forces 
are coupled according to the inverse of the mass matrix
\begin{equation*}
M(q)^{-1}=f(q)\begin{bmatrix}
    1 & -\frac{\cos(q_2)}{\ell_1} & -\frac{\cos(q_3)}{\ell_2}\\
    -\frac{\cos(q_2)}{\ell_1} &
    M_{22}(q_3) &
    M_{23}(q_2,q_3)\\
    -\frac{\cos(q_3)}{\ell_2} & M_{23}(q_2,q_3)
     & M_{33}(q_2)
\end{bmatrix},
\end{equation*}
for some state dependent expressions $M_{22}(q_3),~M_{23}(q_2, q_3),$ $~M_{33}(q_2)$, and non-zero $f(q)$.

The main reason for the loss of sparsity after taking the inverse of the mass matrix comes from the choice of a minimal set of coordinates to represent the dynamics. 
Therefore, we consider an \textit{over-parameterization} of the configuration space,  
$r_1, r_2, r_3\in\mathbb{R}^2$ where $r_i$ represents the $(x_i,y_i)$ coordinate of the $i$th particle in the inertial frame $\mathcal{I}$. The new structure of generalized coordinates can now be defined as a matrix, not a vector, $Q:=[r_1,r_2,r_3]^\top\in\mathbb{R}^{3\times 2}$.
Since a system model based on $Q$ has more parameters than degrees of freedom, there exist constraint forces due to holonomic constraints representing the fixed rods and restricted motion of the cart. 

On the other hand, the new generalized mass matrix is now diagonal, so its inverse is also diagonal. By using constrained Lagrangian mechanics, we can express the dynamics of this over-parameterized
matrix differential equation as
$$\ddot{Q}=-diagM(\overline{m})^{-1}\Gamma(D(\mathcal{G}), Q, \dot{Q})-\textbf{1}_3\begin{bmatrix}
    0 & g
\end{bmatrix}$$
where $\overline{m}=[m_1,m_2,m_3]^\top$, $\textbf{1}_3=[1,1,1]^\top$, $g\in\mathbb{R}$ is the acceleration due to gravity, and $\Gamma(D(\mathcal{G}), Q, \dot{Q})\in\mathbb{R}^{3\times 2}$ is a matrix representation of the holonomic constraint forces as derived in \cite{murray2017mathematical}. Furthermore, we discover that the algebraic properties encoded in an incidence matrix $D(\mathcal{G})$ representing the connections between masses in Fig.~\ref{fig:pendulums} appear within $\Gamma$, which will be formally defined in Section~\ref{sec:graph}. 

In the following section, we generalize the dynamics of a SAR system with an \textit{arborescence} 
graph based on the interconnections between particles. 

\section{Graph Based Dynamics for a SAR System} \label{sec:graph}
The choice of coordinates for SAR systems make it natural to construct a graph from the structure of the system and the states that describe its configuration. Consider a SAR system model as defined in Section~\ref{sec:SAR} and an arborescence $\mathcal{G}:=(V_\mathcal{G}, E_\mathcal{G})$. Let $v_1$ be the root node in a node set $V_\mathcal{G}=\{v_1,\hdots,v_n\}$. An edge $e_j=(v_l,v_m)$ is in $E_\mathcal{G}$ if two masses, $m_l$ and $m_m$, are connected by a rigid rod of length $\ell_j$ in the SAR system for each $j\in\{1,\hdots,n-1\}$. We call $\mathcal{G}$ the ``underlying graph'' of the SAR system. 

\subsection{Matrix generalized coordinates}
For the $i$th mass in the system, we define its position as $r_i\in\mathbb{R}^2$ for $i\in\{1,\hdots,n\}$ whose values are the inertial coordinates of $m_i$ in the $\mathcal{I}$ frame. Then we can define a matrix of generalized coordinates $Q\in\mathbb{R}^{n\times 2}$ as
$$Q:=[r_1,\hdots,r_n]^\top.$$
We will also refer to $Q$ as the ``node coordinates'' of the system, where the $i$th row of $Q$, $r_i$, corresponds to the position state of $v_i\in V_\mathcal{G}$.


A collection of $n-1$ edge coordinates $r_{ej}\in\mathbb{R}^2$ for $j\in\{1,\hdots,n-1\}$ and an edge coordinates matrix $Q_e\in\mathbb{R}^{(n-1)\times2}$ can be defined with $D(\mathcal{G})$ as follows:
\begin{equation}
\label{eqn:Qe}
    Q_e:=[r_{e1},\hdots,r_{e(n-1)}]^\top =D(\mathcal{G})^\top Q.
\end{equation}
In fact, each $r_{ej}$ is the vector displacement  between masses, represented in $\mathcal{I}$, with a distance constraint $\|r_{ej}\|=\ell_j$.

\subsection{Holonomic constraints}
Let $h_j:\mathbb{R}^{n\times 2}\rightarrow\mathbb{R}$ be a holonomic constraint due to the $j$th rigid rod distance constraint
\begin{equation}
\label{eqn:hol}
h_j(Q(t)):=\tfrac{1}{2}\|Q(t)^\top d_j\|^2-\tfrac{1}{2}\ell_j^2=0
\end{equation}
where $d_j$ is defined in Definition~\ref{def:D}. Observe that all constraints are independent given there are no closed loops in SAR systems. Since the holonomic constraint holds for all $t$, the time derivative of $h_j$ is zero for all $t\in\mathbb{R}_{\geq 0}$. The time derivative of $h_j(Q(t))$ can be written as 
$$\frac{dh_j}{dt}=\Tr\left(\frac{\partial h_j}{\partial Q}\dot{Q}\right)=0$$
where $\Tr(\cdot)$ represents the trace operator acting on a square matrix.
Let $A_j:\mathbb{R}^{n\times 2}\rightarrow\mathbb{R}^{2\times n}$ represent the Jacobian of $h_j$ with respect to $Q$ given by
\begin{equation}
\label{eqn:Aj}
A_j(Q):=\frac{\partial h_j}{\partial Q}
    =\tfrac{1}{2}Q^\top (d_jd_j^\top +d_jd_j^\top )\\
    =Q^\top d_jd_j^\top.
\end{equation}
Therefore, the velocity constraint can be written as
\begin{equation} \label{eq:vel_const}
\Tr\left(A_j(Q)\dot{Q}\right)=\Tr\left(Q^\top d_jd_j^\top \dot{Q}\right)=d_j^\top \dot{Q}Q^\top d_j=0.
\end{equation}
Observe that the velocity constraint in \eqref{eq:vel_const} can be interpreted as a constraint in Pfaffian form after vectorizing $Q$. 

\subsection{External forces} \label{sec:forces}
Suppose there exist independent external forces acting on each particle of the SAR system. Let 
$f_i\in\mathbb{R}^2$ be the vector force acting on $i$th particle of the SAR system represented in the $\mathcal{I}$ frame for each $i\in\{1,\hdots,n\}$. Then we can define a matrix of generalized forces $F\in\mathbb{R}^{n\times 2}$ as
\begin{equation}
\label{eq:control}
    F:=[f_1,\hdots,f_n]^\top.
\end{equation}
In the next section, we will consider these forces as the control parameters of a SAR system.

\subsection{Dynamics of the node coordinates}
Let the mass matrix be $M:=\diagM([m_1,\hdots,m_n]^\top )\in\mathbb{R}^{n\times n}$, which is a constant matrix (state independent). The generalized gravity acting on the node coordinates can be defined in matrix form as
$G:=[\mathbf{0}_n\;g\mathbf{1}_n]\in\mathbb{R}^{n\times2}$ where 
$\mathbf{0}_n,\mathbf{1}_n\in\mathbb{R}^n$ are vectors of all $0$ and all $1$ respectively.

\begin{theorem}
Suppose $\mathcal{G}$ is an arborescence with $n$ nodes and $n-1$ edges representing a SAR system on a plane. Let $Q\in\mathbb{R}^{n\times 2}$ represent the generalized coordinates $Q:=[r_1,\hdots,r_n]^\top $ where $r_i\in\mathbb{R}^2$ for $i\in\{1,\hdots,n\}$, $M\in\mathbb{R}^{n\times n}$ represent the constant mass matrix, $G\in\mathbb{R}^{n\times2}$ represent the generalized gravity matrix, and $F\in\mathbb{R}^{n\times 2}$ represent the matrix of generalized forces. Then, there exists $\lambda=[\lambda_1,\hdots,\lambda_{n-1}]^\top \in\mathbb{R}^{n-1}$ such that the constraint forces $\Gamma\in\mathbb{R}^{n\times 2}$ can be written as
\begin{equation}
\label{eqn:Gamma_thm}
    \Gamma = L_w(\mathcal{G})Q
\end{equation} 
and the matrix differential dynamics equation can be written as 
\begin{equation} \label{eq:consensus}
    \ddot{Q}=-M^{-1}\Gamma+(M^{-1}F-G)
\end{equation}
where $L_w(\mathcal{G})=D(\mathcal{G})\Lambda D(\mathcal{G})^\top $ and $\Lambda=\diagM(\lambda)\in\mathbb{R}^{(n-1)\times(n-1)}$, and $\lambda$ can be written using algebraic operations on $D(\mathcal{G}),Q_e,\dot{Q}_e$, and $F$.


\end{theorem}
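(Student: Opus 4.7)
My plan is to derive the result directly from constrained Lagrangian mechanics applied to the over-parameterized coordinates $Q$. First, I would express the kinetic energy as $T = \tfrac{1}{2}\Tr(\dot{Q}^\top M \dot{Q})$ and the gravitational potential as $V = \Tr(Q^\top M G)$, so that appending the $n-1$ holonomic constraints in \eqref{eqn:hol} via multipliers yields $\mathcal{L} = T - V - \sum_{j=1}^{n-1} \lambda_j h_j$. Applying the matrix Euler--Lagrange equation with external forces $F$ then produces
\begin{equation*}
M\ddot{Q} = F - MG - \sum_j \lambda_j\, d_j d_j^\top Q,
\end{equation*}
after noting that $\partial h_j / \partial Q = d_j d_j^\top Q$ follows from the calculation already carried out in \eqref{eqn:Aj}. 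Collecting the $n-1$ rank-one constraint terms gives $\sum_j \lambda_j d_j d_j^\top Q = D\Lambda D^\top Q = L_w(\mathcal{G})Q$, which matches $\Gamma$ in \eqref{eqn:Gamma_thm}, and pre-multiplying by the diagonal $M^{-1}$ yields \eqref{eq:consensus}.

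To recover $\lambda$ algebraically, I would use that each $h_j$ must vanish for all $t$, so differentiating twice and invoking $r_{ej} = d_j^\top Q$ produces the scalar acceleration constraint $r_{ej}^\top \ddot{r}_{ej} + \|\dot{r}_{ej}\|^2 = 0$ for each $j$. Substituting $\ddot{r}_{ej} = d_j^\top \ddot{Q}$ with the dynamics \eqref{eq:consensus} gives the $n-1$ equations
\begin{equation*}
\sum_k (d_j^\top M^{-1} d_k)(r_{ek}^\top r_{ej})\, \lambda_k = d_j^\top(M^{-1} F - G)\, r_{ej} + \|\dot{r}_{ej}\|^2,
\end{equation*}
i.e.\ a linear system $(B \odot R_e)\lambda = b$ with $B := D^\top M^{-1} D$ and $R_e := Q_e Q_e^\top$, where $b$ is assembled from $D$, $Q_e$, $\dot{Q}_e$, and $F$ together with the fixed data $M$ and $G$.

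The main obstacle is to show that $B \odot R_e$ is always invertible so that $\lambda$ is uniquely determined. My argument would appeal to the Schur product theorem: $B = D^\top M^{-1} D$ is positive definite because $D$ has full column rank by Lemma~\ref{lem:D_rank} and $M^{-1}$ is positive definite, while $R_e = Q_e Q_e^\top$ is positive semi-definite with strictly positive diagonal entries $\|r_{ej}\|^2 = \ell_j^2 > 0$ enforced by \eqref{eqn:hol}. Since a PSD matrix with positive diagonal Hadamard-multiplied against a positive definite matrix is itself positive definite, we get $B \odot R_e \succ 0$, yielding the closed-form expression $\lambda = (B \odot R_e)^{-1} b$ as a function of $D$, $Q_e$, $\dot{Q}_e$, and $F$ (with $M$ and $G$ appearing as fixed parameters), which delivers the final claim of the theorem.
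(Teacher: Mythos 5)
Your proposal is correct and follows essentially the same route as the paper: constraint forces as multiplier-weighted Jacobians collected into $D\Lambda D^\top Q$, the matrix Euler--Lagrange equation, twice-differentiated holonomic constraints yielding the linear system $\left(D^\top M^{-1}D \odot Q_eQ_e^\top\right)\lambda = b$, and invertibility via the Schur-product refinement for a positive definite matrix Hadamard-multiplied with a PSD matrix having positive diagonal. The only cosmetic difference is that the paper additionally observes $d_j^\top G = 0$ (since the range of $G$ is $\mathrm{span}\{\mathbf{1}_n\}$ and $\mathbf{1}_n\in\mathcal{N}(D(\mathcal{G})^\top)$), so the gravity term drops out of $b$ entirely, which is what lets $\lambda$ be expressed using only $D(\mathcal{G})$, $Q_e$, $\dot{Q}_e$, and $F$ as claimed.
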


\begin{proof}
First, let $\lambda_i$ represent the Lagrange multiplier with respect to the holonomic constraint $h_i$ in \eqref{eqn:hol} for $i\in\{1,\hdots, n-1\}$, and let
$\Gamma\in\mathbb{R}^{n\times 2}$ be a matrix of constraint forces where the $i$th row represents the force due to constraints on the $i$th node as a vector in $\mathbb{R}^2$. The constraint forces are linear combinations of the Jacobians $A_j(Q)^\top$ defined in \eqref{eqn:Aj} with weights $\lambda_j$ as follows:
\begin{equation} \label{eq:Gamma}
    \Gamma:= \sum_{j=1}^{n-1} \lambda_jA_j(Q)^\top
    = \sum_{j=1}^{n-1} d_j\lambda_jd_j^\top Q
    = D(\mathcal{G})\Lambda D(\mathcal{G})^\top Q.
\end{equation}
Observe that $\Gamma$ is equal to \eqref{eqn:Gamma_thm}. Then, the generalized Euler-Lagrange equation for the choice of coordinates $Q$ is written as the matrix sum of forces acting on each mass given by
\begin{equation} \label{eq:EL}
    M\ddot{Q}+MG+\Gamma=F.
\end{equation}
By left multiplying \eqref{eq:EL} by $M^{-1}$ and rearranging terms, we obtain the matrix differential equation as stated in \eqref{eq:consensus}.

Next, we will find a closed-form expression for $\lambda$, starting from the Pfaffian equations \eqref{eq:vel_const}. Since each velocity constraint holds for all time, their time derivatives are also zero for all time and we can write
\begin{equation*}
    \frac{d}{dt}\left[d_j^\top \dot{Q}Q^\top d_j\right]
=d_j^\top \ddot{Q}Q^\top d_j+\|\dot{Q}^\top d_j\|^2=0.
\end{equation*}
Now, by replacing $\ddot{Q}$ in the above equation with \eqref{eq:consensus}, we get 
\begin{equation*}
d_j^\top (-G+M^{-1}(F-D(\mathcal{G})\Lambda D(\mathcal{G})^\top Q))r_{ej}+\|\dot{r}_{ej}\|^2=0.
\end{equation*}
Note that $d_j^\top G=0$ because the range space of $G$ is equal to  $\mathrm{span}\{\mathbf{1}_n\}$. Therefore, the
equation can be simplified to
\begin{equation}
\label{eqn:lambda_step1}
    d_j^\top M^{-1}D(\mathcal{G})\Lambda Q_er_{ej}=d_j^\top M^{-1}Fr_{ej}+\|\dot{r}_{ej}\|^2.
\end{equation}
Observe that since $\Tr(AB)=\Tr(BA)$ for matrices $A$ and $B$ with appropriate dimensions, \eqref{eqn:lambda_step1} can be rewritten as
\begin{equation*}
\Tr\left(Q_er_{ej}d_j^\top M^{-1}D(\mathcal{G})\Lambda\right)=d_j^\top M^{-1}Fr_{ej}+\|\dot{r}_{ej}\|^2.
\end{equation*}
Notice the left side of the equation is the trace of the outer product of the vectors $Q_er_{ej}$ and $D(\mathcal{G})^\top M^{-1}d_j$ times the diagonal matrix $\Lambda$. This is equivalent to the Hadamard (elementwise) product of $D(\mathcal{G})^\top M^{-1}d_j$ and $Q_er_{ej}$, inner product with $\lambda$, which leads to the following equation with simpler notation:
\begin{equation*}
\left(D(\mathcal{G})^\top M^{-1}d_j\odot Q_er_{ej}\right)^\top \lambda=d_j^\top M^{-1}Fr_{ej}+\|\dot{r}_{ej}\|^2.
\end{equation*}
Now, stack these equations vertically for $j\in\{1,\hdots,n-1\}$ to form the equation
$$\left(L_e(\mathcal{G})\odot Q_eQ_e^\top \right)\lambda=\diagV\left(D(\mathcal{G})^\top M^{-1}FQ_e^\top \!+\dot{Q}_e\dot{Q}_e^\top \right)$$
where $L_e(\mathcal{G}):=D(\mathcal{G})^\top M^{-1}D(\mathcal{G})$, a node-weighted edge Laplacian of $\mathcal{G}$ according to Definition~\ref{def:weighted_edge_laplacian}. It follows from Lemma \ref{lem:D_rank} and the fact that $\diagV(M^{-1})$ has all nonzero quantities that $L_e(\mathcal{G})$ is invertible. Since it is also symmetric, $L_e(\mathcal{G})$ is positive definite. Next, observe that the diagonal of $Q_eQ_e^\top$ is entirely nonzero since $\|r_{e,j}\|^2=\ell_j^2$ for all $j$. Since the Hadamard product of a positive definite matrix and a positive semidefinite matrix with no zeros on its diagonal is positive definite (\cite{Djokovic1964}, Lemma~1), $L_e(\mathcal{G})\odot Q_eQ_e^\top$ is positive definite, and so invertible. Therefore, the closed-form expression for $\lambda$ can be written as
\begin{equation}
\label{eq:Lambda}
\lambda=\!\left(L_e(\mathcal{G})\!\odot Q_eQ_e^\top \right)^{\text{--}1}\!\diagV\!\left(D(\mathcal{G})^{\!\top} M^{\text{--}1}FQ_e^\top \!+\dot{Q}_e\dot{Q}_e^\top \right)\!.
\end{equation}
Hence, the Lagrange multipliers $\lambda_j$ can be computed with algebraic operations on $D(\mathcal{G})$, edge coordinates $Q_e$, their derivatives $\dot{Q}_e$, and $F$. 
\end{proof}

The proposed dynamics in \eqref{eq:consensus} for a \textit{single} robotic system can be interpreted as
a \textit{networked} system with node weights $1/m_i$ and state-dependent edge weights $\lambda_j$. 
In general, each node weight in a second-order networked system indicates the magnitude by which that node's acceleration is affected by the dynamics of adjacent nodes \cite{yu2009second}. The node weights are the reciprocal masses of the particles in \eqref{eq:consensus}, meaning that particles with larger masses have smaller accelerations due to the network. Each edge weight in a second-order networked system indicates the magnitude by which the nodes incident to the edge affect each other's dynamics. Also, the edge weights are the Lagrange multipliers of the system constraints which are defined as the relative magnitudes of the constraint forces in \eqref{eq:consensus}. That means larger forces exerted between connecting bars will result in greater accelerations on the masses connected to them.

\begin{remark}
The form of the dynamics equation in \eqref{eq:consensus} resembles the classical second-order consensus equation for multi-agent systems presented in \cite{yu2009second} and other relevant literature. The main difference between graph-based network dynamics considered in multi-agent systems literature and the proposed dynamics formulation lies in the fact that the connection is \textit{physical} in SAR systems. 
\end{remark}

\subsection{Dynamics based on edge coordinates}
Notice that the matrix differential dynamics equation in \eqref{eq:consensus} contains a mixture of node and edge configurations. In this section, we propose the \textit{edge dynamics} of the given SAR model, which is only a function of the edge coordinates $Q_e$, similar to the edge Laplacian dynamics presented in \cite{zeng2016convergence}.

\begin{corollary}
\label{cor:edge_dyn}
Suppose that $Q_e=D(\mathcal{G})^\top Q$ represents the edge coordinates of the SAR system. Then the ``edge dynamics'' can be written as
\begin{equation} \label{eq:edge}
    \ddot{Q}_e=-L_e(\mathcal{G})\Lambda Q_e+D(\mathcal{G})^\top M^{-1}F
\end{equation}
where $L_e(\mathcal{G}):=D(\mathcal{G})^\top M^{-1}D(\mathcal{G})$.
\end{corollary}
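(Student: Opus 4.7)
The proof should be essentially a direct computation that chains together \eqref{eqn:Qe}, the node-level dynamics \eqref{eq:consensus}, and the expression $\Gamma = D(\mathcal{G})\Lambda D(\mathcal{G})^\top Q$ from Theorem~1, with one small auxiliary fact needed to kill the gravity term. The plan is to left-multiply the node dynamics by $D(\mathcal{G})^\top$ and then rewrite every term in terms of $Q_e$.

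First I would note that since $\mathcal{G}$ is a fixed arborescence, the incidence matrix $D(\mathcal{G})$ is constant in time. Hence differentiating \eqref{eqn:Qe} twice yields $\ddot{Q}_e = D(\mathcal{G})^\top \ddot{Q}$. Substituting the node dynamics \eqref{eq:consensus} gives
\begin{equation*}
\ddot{Q}_e = -D(\mathcal{G})^\top M^{-1}\Gamma + D(\mathcal{G})^\top M^{-1} F - D(\mathcal{G})^\top G.
\end{equation*}

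Next I would insert $\Gamma = D(\mathcal{G})\Lambda D(\mathcal{G})^\top Q$ from Theorem~1 into the first term and collapse the factor $D(\mathcal{G})^\top Q$ back into $Q_e$ via \eqref{eqn:Qe}, producing
\begin{equation*}
D(\mathcal{G})^\top M^{-1}\Gamma = D(\mathcal{G})^\top M^{-1} D(\mathcal{G})\Lambda D(\mathcal{G})^\top Q = L_e(\mathcal{G})\Lambda Q_e,
\end{equation*}
using the definition $L_e(\mathcal{G}) := D(\mathcal{G})^\top M^{-1} D(\mathcal{G})$ stated in the corollary (which matches Definition~\ref{def:weighted_edge_laplacian} with node weights $1/m_i$).

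The only nontrivial bookkeeping step is disposing of the gravity contribution $D(\mathcal{G})^\top G$. Since $G = [\mathbf{0}_n\;\; g\mathbf{1}_n]$, this equals $[D(\mathcal{G})^\top \mathbf{0}_n \;\; g\, D(\mathcal{G})^\top \mathbf{1}_n]$, and by Lemma~\ref{lem:1_in_null} we have $\mathbf{1}_n \in \mathcal{N}(D(\mathcal{G})^\top)$, so $D(\mathcal{G})^\top G = \mathbf{0}_{(n-1)\times 2}$. Combining the three simplifications gives exactly \eqref{eq:edge}.

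I do not expect any real obstacle here; the statement is essentially a corollary in the literal sense, obtained by projecting the node dynamics onto the edge space via $D(\mathcal{G})^\top$. The one place a careless proof could go wrong is mistakenly carrying along a gravity term; emphasizing that the column span of $G$ lies in $\mathrm{span}\{\mathbf{1}_n\} \subseteq \mathcal{N}(D(\mathcal{G})^\top)$ is the only substantive observation beyond pure algebra.
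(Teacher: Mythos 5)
Your proposal is correct and follows essentially the same route as the paper's proof: differentiate $Q_e=D(\mathcal{G})^\top Q$ twice, substitute the node dynamics \eqref{eq:consensus} with $\Gamma=L_w(\mathcal{G})Q=D(\mathcal{G})\Lambda D(\mathcal{G})^\top Q$, collapse $D(\mathcal{G})^\top Q$ into $Q_e$, and eliminate gravity via $\mathbf{1}_n\in\mathcal{N}(D(\mathcal{G})^\top)$. Your explicit invocation of Lemma~\ref{lem:1_in_null} and the remark that $D(\mathcal{G})$ is time-invariant only make the same argument slightly more careful than the paper's.
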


\begin{proof}
By using the definition of edge coordinates in \eqref{eqn:Qe}, we can compute the second derivative of $Q_e$ from the closed form expression of $Q$ in \eqref{eq:consensus} given by
\begin{equation*}\ddot{Q}_e=D(\mathcal{G})^\top \ddot{Q}=D(\mathcal{G})^\top \left(-M^{-1}(L_w(\mathcal{G})Q+F)-G\right).
\end{equation*}
Note that $D(\mathcal{G})^\top G=0$ since the range space of $G$ is $\mathrm{span}\{\mathbf{1}_n\}$, $Q_e=D(\mathcal{G})^\top Q$, and $L_w(\mathcal{G})=D(\mathcal{G})\Lambda D(\mathcal{G})^\top$. Then, the equation further simplifies to 
\begin{equation*}
    \ddot{Q}_e=-L_e(\mathcal{G})\Lambda Q_e+D(\mathcal{G})^\top M^{-1}F
\end{equation*}
where $L_e(\mathcal{G})$ is the same as defined in Corollary~\ref{cor:edge_dyn}.
\end{proof}
\begin{remark}
\label{rem:force}
    Note that the node dynamics \eqref{eq:consensus} and edge dynamics \eqref{eq:edge} are not in the traditional control affine form as $\Lambda$ depends on the control force $F$. In the next section, we derive the sufficient condition for $F$ such that $\Lambda$ becomes independent to the control $F$.
\end{remark}

\section{Decentralized Network Control for SAR Systems}
\label{sec:control}
In a SAR system, the root node of the arborescence $\mathcal{G}$ can serve as a leader while the other nodes can be seen as followers working towards reaching a harmony based on their interconnections (edges). Since the edge perspective of the dynamics in \eqref{eq:edge} is formulated in such a way where the net control force applied to each edge directly affects its own acceleration plus network terms, we consider the following control problem.
\begin{prob}[Edge coordinate control problem]
\label{prob:1}
Let $r_{ej,d}\in\mathbb{R}^2$ be a desired coordinate for the $j$th edge in a SAR system satisfying the holonomic constraint $\|r_{ej,d}\|=\ell_j$, and let $Q_{e,d}=[r_{e1,d},\hdots,r_{e(n-1),d}]^\top \in\mathbb{R}^{(n-1)\times 2}$ be the desired coordinates for all edges. The goal is to design a controller for $F$ in \eqref{eq:control} such that $Q_e(t)$ approaches $Q_{e,d}$ as $t$ goes to infinity, where the leader (the root node), is controlled independently with some open-loop control $f_l(t)\in\mathbb{R}^2$ at time $t$.
\end{prob}
Note that for a constant desired edge coordinate $Q_{e,d}$, the derivative of $Q_{e,d}$ is zero. Therefore, with $F=\textbf{0}_{n\times 2}$, control $Q_{e,d}$ is an equilibrium of the edge dynamics \eqref{eq:edge}.

\subsection{Network based feedback controller design}

First, observe that the edge dynamics in \eqref{eq:edge} is over-actuated as there are $n-1$ edges and $n$ degrees of control. However as stated in Problem~\ref{prob:1}, the control of the leader node
is independent, and so the proposed SAR system is fully actuated. As stated in Remark~\ref{rem:force}, the vector of Lagrange multipliers $\lambda$ is a function of $F$. The following proposition shows the sufficient condition for $\Lambda$ to be independent to the control and only dependent on the network topology and current edge coordinates and their velocities.
\begin{prop}
\label{prop:suff}
    Let $U\in\mathbb{R}^{(n-1)\times 2}$ be an arbitrary matrix such that the $j$th row of $U$, denoted $u_j^\top\in\mathbb{R}^{1\times 2}$, is orthogonal to the $j$th edge coordinate, $e_j^\top Q_e(t)=r_{ej}^\top(t)$ at time $t$ for all $j\in\{1,\hdots, n-1\}$. Then there exists a control $F\in\mathbb{R}^{n\times 2}$ such that
    \begin{equation} \label{eq:plug}
D(\mathcal{G})^\top M^{-1}F=U
\end{equation}
and $\lambda$ in \eqref{eq:Lambda} is independent from $F$ at time $t$.
\end{prop}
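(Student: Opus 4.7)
The plan is to handle the proposition in two self-contained parts: (i) establish that the matrix equation $D(\mathcal{G})^\top M^{-1}F = U$ admits a solution $F$ for any prescribed $U$, and (ii) plug that solution into the closed-form expression \eqref{eq:Lambda} and verify that the $F$-dependent piece vanishes thanks to the row-wise orthogonality hypothesis on $U$.

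For part (i), I would argue as follows. The matrix $D(\mathcal{G})^\top M^{-1} \in \mathbb{R}^{(n-1)\times n}$ is the product of $D(\mathcal{G})^\top$ and the invertible diagonal matrix $M^{-1}$. Since $\mathcal{G}$ is an arborescence, Lemma~\ref{lem:D_rank} gives $\mathrm{rank}(D(\mathcal{G})) = n-1$, and multiplication by the nonsingular $M^{-1}$ preserves this rank, so $D(\mathcal{G})^\top M^{-1}$ has full row rank $n-1$. Consequently, for each of the two columns of $U$ the linear system in the corresponding column of $F$ is consistent and underdetermined (with a one-dimensional affine family of solutions in $\mathbb{R}^n$, exactly matching the leader's extra actuation degree of freedom mentioned in the setup). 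Thus an $F$ satisfying \eqref{eq:plug} exists.

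For part (ii), I would substitute \eqref{eq:plug} directly into \eqref{eq:Lambda}, obtaining
\begin{equation*}
\lambda = \bigl(L_e(\mathcal{G})\odot Q_e Q_e^\top\bigr)^{-1}\diagV\!\bigl(U Q_e^\top + \dot{Q}_e\dot{Q}_e^\top\bigr).
\end{equation*}
The key observation is that the $j$th diagonal entry of $U Q_e^\top$ is exactly $\mathbf{e}_j^\top U Q_e^\top \mathbf{e}_j = u_j^\top r_{ej}$, which is zero by the orthogonality assumption on the rows of $U$. Hence $\diagV(U Q_e^\top) = \mathbf{0}_{n-1}$ and
\begin{equation*}
\lambda = \bigl(L_e(\mathcal{G})\odot Q_e Q_e^\top\bigr)^{-1}\diagV\!\bigl(\dot{Q}_e\dot{Q}_e^\top\bigr),
\end{equation*}
which depends only on the graph data $D(\mathcal{G})$, the masses (through $L_e(\mathcal{G})$), the current edge coordinates $Q_e$, and the edge velocities $\dot{Q}_e$ — but not on $F$. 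This delivers the conclusion.

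I do not anticipate a serious obstacle here; the proof is almost algebraic. The only subtle points are remembering to invoke Lemma~\ref{lem:D_rank} to justify solvability of \eqref{eq:plug}, and being careful to state that ``$\lambda$ is independent of $F$'' really means that any two control inputs $F,F'$ satisfying \eqref{eq:plug} yield identical values of $\lambda$ at the instant $t$, which is immediate since the formula depends on $F$ only through the combination $D(\mathcal{G})^\top M^{-1}F = U$, whose diagonal contribution through $Q_e^\top$ is annihilated by orthogonality.
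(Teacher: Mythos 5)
Your proposal is correct and follows essentially the same route as the paper's proof: full row rank of $D(\mathcal{G})^\top M^{-1}$ (via Lemma~\ref{lem:D_rank} and nonsingularity of $M$) gives existence of $F$, and the orthogonality hypothesis makes $\diagV(UQ_e^\top)=\mathbf{0}_{n-1}$, eliminating $F$ from \eqref{eq:Lambda}. Your added clarification of what ``independent of $F$'' means is a nice touch but does not change the argument.
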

\begin{proof}
First, observe that $D(\mathcal{G})^\top M^{-1}$ with dimension $(n-1)\times n$ is full row rank since $D(\mathcal{G})$ is full column rank and $M$ is nonsingular. Therefore, there exists $F$ satisfying any $U\in\mathbb{R}^{(n-1)\times 2}$. Now, suppose that  $U\in\mathbb{R}^{(n-1)\times 2}$ satisfies the condition stated in the proposition where each row is orthogonal to the matched row in $Q_e(t)$ at time $t$. Note that the control $F$ appears in the $\lambda$ expression in \eqref{eq:Lambda} as \begin{equation}
        \diagV(D(\mathcal{G})^\top M^{-1}FQ_e^\top)=\diagV(UQ_e^\top)=\textbf{0}_{n-1}.
    \end{equation}
    Therefore, $\lambda$ is independent to the $F$ solving \eqref{eq:plug} with each row of $U$ being orthogonal to the matched row of $Q_e(t)$ at time $t$. 
\end{proof}

The following lemma shows that the inverse of $D(\mathcal{G})^\top M^{-1}$ can be directly computed without using the Moore-Penrose pseudo-inverse, but by using the properties of graphs. 
\begin{lemma}
\label{lem:H}
    Suppose that $\mathcal{G}$ is an arborescence and the matrix $H(\mathcal{G})\in\mathbb{R}^{(n-1)\times n}$ is defined such that the $(i,j)$th component is
    \begin{equation}
        h_{ij} = \begin{cases}
            1 & \text{if $v_j$ is in the head component of }\mathcal{G}\backslash e_i\\
            0 & \text{otherwise}
        \end{cases},
    \end{equation}
    where the head component of a graph is defined in Definition~\ref{def:tail_comp}. Then
    \begin{equation}
        H(\mathcal{G})D(\mathcal{G})=I_{n-1}
    \end{equation}
    holds.
\end{lemma}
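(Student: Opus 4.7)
The plan is to prove the identity entry-by-entry: I would compute the $(i,k)$ entry of the product $H(\mathcal{G})D(\mathcal{G})$ from the definitions of $h_{ij}$ and $d_{jk}$, and show it equals $\delta_{ik}$. Writing $e_k = (v_l, v_m)$, the column $d_k$ of $D(\mathcal{G})$ has only two nonzero entries: $+1$ at row $m$ and $-1$ at row $l$. Therefore
\begin{equation*}
    [H(\mathcal{G})D(\mathcal{G})]_{ik} = \sum_{j=1}^{n}h_{ij}d_{jk} = h_{i,m} - h_{i,l}.
\end{equation*}
So the whole argument reduces to tracking which side of the cut $\mathcal{G}\backslash e_i$ the endpoints $v_l$ and $v_m$ of $e_k$ land on.

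The first step, invoked repeatedly, is the topological fact that cutting a single edge from an arborescence produces exactly two disjoint sub-arborescences, i.e.\ the tail component and the head component from Definition~\ref{def:tail_comp} are well-defined and partition $V_\mathcal{G}$. I would justify this briefly using Lemma~\ref{lem:k=n-1}: an arborescence on $n$ nodes has $n-1$ edges and is acyclic, so its underlying undirected graph is a tree; removing any edge from a tree disconnects it into exactly two connected components.

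Next I handle the diagonal case $i = k$. Here the cut edge is $e_i = (v_l, v_m)$ itself. By definition of the head component, $v_m$ lies in the head component of $\mathcal{G}\backslash e_i$ while $v_l$ lies in the tail component, so $h_{i,m} = 1$ and $h_{i,l} = 0$, giving entry $1$.

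For the off-diagonal case $i \neq k$, the edge $e_k$ still belongs to $\mathcal{G}\backslash e_i$, so its two endpoints $v_l$ and $v_m$ are joined by a path in $\mathcal{G}\backslash e_i$ and therefore lie in the \emph{same} component of the cut graph. This forces $h_{i,m} = h_{i,l}$, whence the entry vanishes. The main obstacle is not computational but rather making the connectivity argument rigorous; once one recognizes that an arborescence is a tree and that the tail/head components partition the vertex set, the remaining algebra is immediate, and collecting the two cases yields $H(\mathcal{G})D(\mathcal{G}) = I_{n-1}$.
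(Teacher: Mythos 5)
Your proof is correct, but it takes a different route from the paper. You verify the identity entry-by-entry: writing $e_k=(v_l,v_m)$, the $(i,k)$ entry of $H(\mathcal{G})D(\mathcal{G})$ is $h_{i,m}-h_{i,l}$, and you split into the case $i=k$ (the head $v_m$ of the cut edge lies in the head component and the tail $v_l$ does not, giving $1$) and $i\neq k$ (the edge $e_k$ survives the cut, so both endpoints lie in the same component of $\mathcal{G}\backslash e_i$ and the entry vanishes), with the well-definedness of the two components justified by the fact that an arborescence is a tree on $n$ nodes with $n-1$ edges. The paper instead avoids any entrywise computation: it defines the shifted matrix $H(\mathcal{G})^*:=H(\mathcal{G})-\mathbf{1}_{n-1}\mathbf{1}_n^\top$, cites Lemma~4.15 of \cite{bapat2010graphs} for the identity $H(\mathcal{G})^*D(\mathcal{G})=I_{n-1}$, and then removes the shift using $D(\mathcal{G})^\top\mathbf{1}_n=\mathbf{0}_{n-1}$ from Lemma~\ref{lem:1_in_null}. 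What your argument buys is self-containedness and transparency --- it exposes exactly why the combinatorial definition of $H(\mathcal{G})$ inverts $D(\mathcal{G})$, using nothing beyond Lemma~\ref{lem:k=n-1} and the definition of the incidence matrix --- whereas the paper's proof is shorter on the page but delegates the combinatorial core to an external result and additionally needs the observation that $H(\mathcal{G})$ differs from the cited matrix by a rank-one term annihilated by $D(\mathcal{G})$.
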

\begin{proof}
Define $H(\mathcal{G})^*\in\mathbb{R}^{(n-1)\times n}$ such that $H(\mathcal{G})^*:= H(\mathcal{G})-\textbf{1}_{n-1}\textbf{1}_{n}^\top.$
By invoking Lemma~4.15 in \cite{bapat2010graphs}, we have
$$H(\mathcal{G})^*D(\mathcal{G})=I_{n-1}.$$
Since $D(G)^\top\textbf{1}_n=\textbf{0}_{n-1}$ from Lemma~\ref{lem:1_in_null}, 
    \begin{equation}
        H(\mathcal{G})D(\mathcal{G}) = (H(\mathcal{G})^*+\textbf{1}_{n-1}\textbf{1}_n^\top)D(\mathcal{G}) = I_{n-1}
    \end{equation}
    holds.
\end{proof}
Observe that the left inverse $H(\mathcal{G})$ of $D(\mathcal{G})$ is only composed with zeros and ones.  
\begin{corollary}
    If $F(t)$ has the following structure with $U=[u_1,\hdots,u_{n-1}]^\top\in\mathbb{R}^{(n-1)\times 2}$ where $u_j\in\mathbb{R}^2$ is orthogonal to $r_{e,j}$ at time $t$:
    \begin{equation}
    \label{eq:proposed_controller}
        F(t)=M(\alpha\textbf{1}_nf_l(t)^\top+G+H(\mathcal{G})^\top U)
    \end{equation}
    where $\alpha=m_1^{-1}$ is a constant scalar and $G:=[\mathbf{0}_n\;g\mathbf{1}_n]$ is the gravity term, then $\lambda$ is independent of $F$ and $f_1(t)=f_l(t)+m_1g\mathbf{e}_2$, and the edge dynamics can be written as 
    \begin{equation} \label{eq:leader_follower}
    \ddot{Q}_e=-L_e(\mathcal{G})\Lambda Q_e+U.
\end{equation}
\end{corollary}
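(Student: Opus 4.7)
The plan is to verify the corollary by a direct substitution: plug the proposed $F(t)$ into the quantity $D(\mathcal{G})^\top M^{-1} F$ appearing in both the edge dynamics \eqref{eq:edge} and the expression \eqref{eq:Lambda} for $\lambda$, and show that all three terms of $F$ collapse to exactly $U$. Since $M^{-1}M = I_n$, the factor $M$ in \eqref{eq:proposed_controller} cancels with $M^{-1}$, leaving
\[
D(\mathcal{G})^\top M^{-1}F = \alpha\,D(\mathcal{G})^\top\mathbf{1}_n f_l(t)^\top + D(\mathcal{G})^\top G + D(\mathcal{G})^\top H(\mathcal{G})^\top U.
\]
The first term vanishes by Lemma~\ref{lem:1_in_null} (since $D(\mathcal{G})^\top\mathbf{1}_n = \mathbf{0}_{n-1}$). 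The second term vanishes because the columns of $G$ are scalar multiples of $\mathbf{1}_n$, so by the same lemma $D(\mathcal{G})^\top G = 0$. The third term reduces to $U$ by transposing Lemma~\ref{lem:H}: $D(\mathcal{G})^\top H(\mathcal{G})^\top = (H(\mathcal{G})D(\mathcal{G}))^\top = I_{n-1}$. Thus $D(\mathcal{G})^\top M^{-1}F = U$.

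Next, with each row $u_j$ orthogonal to $r_{e,j}$ by hypothesis, Proposition~\ref{prop:suff} applies directly: the term $\operatorname{diagV}(D(\mathcal{G})^\top M^{-1} F Q_e^\top) = \operatorname{diagV}(U Q_e^\top)$ is zero, so $\lambda$ in \eqref{eq:Lambda} becomes independent of $F$. Substituting the identity $D(\mathcal{G})^\top M^{-1} F = U$ into \eqref{eq:edge} then immediately yields the claimed closed-loop edge dynamics $\ddot{Q}_e = -L_e(\mathcal{G})\Lambda Q_e + U$.

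The remaining piece is identifying the leader's actual applied force $f_1(t)$. I would read off the first row of $F(t)$ in \eqref{eq:proposed_controller}. Because $M$ is diagonal with $(1,1)$ entry $m_1$ and $\alpha = m_1^{-1}$, the contribution from the $\alpha\mathbf{1}_n f_l^\top$ term gives $f_l(t)^\top$; the $G$ contribution gives $m_1 g\,\mathbf{e}_2^\top$. The only nontrivial step is showing that the first row of $H(\mathcal{G})^\top U$, i.e.\ the first column of $H(\mathcal{G})$, is zero. I would argue this from Definition~\ref{def:tail_comp}: since $v_1$ is the root of the arborescence, for every edge $e_i$ the unique directed path from $v_1$ passes through the tail of $e_i$, placing $v_1$ in the tail component of $\mathcal{G}\backslash e_i$ and never in the head component. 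Hence $h_{i,1} = 0$ for all $i$, and $f_1(t) = f_l(t) + m_1 g\,\mathbf{e}_2$.

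The main subtlety I expect is the last step: the observation that the root is never in a head component is geometrically obvious but needs to be pinned to the definitions, and it is the point that makes the decomposition into an independently chosen leader control $f_l(t)$ and a follower-shaping control $U$ come out cleanly with no residual coupling.
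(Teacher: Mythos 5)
Your proposal is correct and follows essentially the same route as the paper's proof: substituting $F(t)$ into $D(\mathcal{G})^\top M^{-1}F$, cancelling the leader and gravity terms via Lemma~\ref{lem:1_in_null}, reducing the third term to $U$ via Lemma~\ref{lem:H}, invoking Proposition~\ref{prop:suff} for the independence of $\lambda$, and using the fact that the root never lies in a head component (i.e.\ the first column of $H(\mathcal{G})$ is zero) to read off $f_1(t)=f_l(t)+m_1g\mathbf{e}_2$. Your write-up merely spells out the term-by-term cancellation that the paper states in a single sentence.
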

\begin{proof}
    By substituting the proposed $F$ into \eqref{eq:plug} and invoking Lemma~\ref{lem:1_in_null} and Lemma~\ref{lem:H}, we can see that the sufficient condition in Proposition~\ref{prop:suff} is satisfied. In addition, since $\mathcal{G}$ is an arborescence (rooted out-branching tree), $H(\mathcal{G})\mathbf{e}_1=\textbf{0}_{n-1}$ holds since no head component contains the root node. Therefore, 
\begin{equation*}
    f_1(t) = F(t)^\top\mathbf{e}_1 = m_1(\alpha f_l(t)+g\mathbf{e}_2) = f_l(t) + m_1g\mathbf{e}_2.
\end{equation*}
Hence, the edge dynamics using the proposed control framework will be equal to \eqref{eq:leader_follower}.
\end{proof}

\subsection{Recursively defined follower controller design}

Given an arborescence graph $\mathcal{G}$, and the corresponding control framework in \eqref{eq:proposed_controller}, denote $f_i$ as a \textit{follower controller} if $i\in\{2,\hdots,n\}$ is in the set of non-root nodes. The follower controller can be recursively defined based on the edges incident to each node by separating the rows of \eqref{eq:plug} and using the fact that $f_1=f_l + m_1g\mathbf{e}_2$,
\begin{equation}
    f_i = \bar{f}_i + m_ig\mathbf{e}_2~~\mathrm{where}~~
    \bar{f}_i := m_i\left(\frac{\bar{f}_k}{m_k} + u_j\right)
\end{equation}
and $e_j=(v_k,v_i)\in E_\mathcal{G}$ for all $i\in\{2,\hdots,n\}$.

\subsection{Decentralized follower controller design}
Let the $j$th edge errors be
\begin{equation*}
e_{c,j}:=r_{ej}-r_{ej,d},\quad
e_{v,j}:=\dot{r}_{ej}-\dot{r}_{ej,d}
\end{equation*}
where the subscripts, $c, v$ represent the error of the coordinate and its velocity respectively. 
The sufficient condition for the new control parameters in Proposition~\ref{prop:suff} is to make sure each $u_j$ is orthogonal to $r_{ej}(t)$ at time $t$. Let the time-varying projection matrix, $P_j(t)\in\mathbb{R}^{2\times 2}$, be defined as 
\begin{equation*}
    P_j(t):=I_2-\frac{r_{ej}(t)r_{ej}(t)^\top }{\ell_j^2}
\end{equation*}
which is a projection onto the complement of the space spanned by $r_{ej}(t)$.
Now, choose $u_j$ to be the projection of a simple proportional and differential controller based on the edge tracking error onto that complement space given by
\begin{equation}
    \label{eq:final_control_u}
    u_j=P_j(-k_ce_{c,j}-k_ve_{v,j})
\end{equation}
where $k_c,k_v\in\mathbb{R}$ are positive control gains.
Therefore, the feedback control of the $i$th follower node can be expressed using the proposed controller in \eqref{eq:proposed_controller} and $u_j$ in \eqref{eq:final_control_u} as
\begin{equation}
\label{eq:feedback_control}
    f_i = \bar{f}_i + m_ig\mathbf{e}_2,~
    \bar{f}_i = m_i\left(\frac{\bar{f}_k}{m_k} + P_j(-k_ce_{c,j}-k_ve_{v,j})\right)
\end{equation}
where $v_k$ is the tail node of the edge $e_j$.

\begin{remark}
    The proposed controller in \eqref{eq:feedback_control} is decentralized in the sense that the force applied to each follower node only requires time-dependent knowledge of the force, position, and velocity of the upstream node, in addition to local information.
    

\end{remark}

\subsection{Upperbound of the residual vector field}

The error dynamics of $j$th edge of the SAR system can be expressed as 
\begin{equation*}
\begin{split}
    \dot{e}_{c,j}&=e_{v,j},\\
    \dot{e}_{v,j}&=-k_cP_je_{c,j}-k_vP_je_{v,j}-Q_e^\top \Lambda L_e(\mathcal{G})\mathbf{e}_j.
\end{split}
\end{equation*}
Define $X_j\in\mathbb{R}^{2}$ as $X_j:=-Q_e^\top \Lambda L_e(\mathcal{G})\mathbf{e}_j$. Then there exists a nonzero constant vector $b\in\mathbb{R}^{n-1}$ such that $X_j=-Q_e^\top \Lambda b$
since $L_e(\mathcal{G})\mathbf{e}_j$ is a nonzero constant vector. Also, since $\Lambda$ is a diagonal matrix, we have the following inequalty:
\begin{equation}
    \|X_j\|_2\leq \beta_1\|Q_e^\top\|_2\|\lambda\|_2
\end{equation}
for some $\beta_1>0$ where $\|\cdot\|_2$ is the induced $l_2$ norm. By using inequalities between the Frobenius and $l_2$ norms, and the $l_2$ and $l_1$ norms, a new bound on $\|X_j\|_2$ can be obtained:
\begin{eqnarray*}
    \|X_j\|_2\leq \beta_2\|\lambda\|_1&\leq& \beta_2\|J^{-1}\|_\mathcal{F}\|\diagV(\dot{Q}_e\dot{Q}_e^\top)\|_1
    \\&=&\beta_2\|J^{-1}\|_\mathcal{F}~\Tr(\dot{Q}_e\dot{Q}_e^\top)
\end{eqnarray*}
for some $\beta_2>0$ and $J=L_e(\mathcal{G})\odot Q_eQ_e^\top$ where $\|\cdot\|_\mathcal{F}$ represents the Frobenius norm. Observe that $J$ is positive definite for all time $t$. Let $\sigma(J)>0$ represent the smallest eigenvalue of $J$, then we have 
\begin{equation}
\label{eq:bound}
    \|X_j\|_2\leq \frac{\beta_2}{\sigma(J)}\Tr(\dot{Q}_e\dot{Q}_e^\top)=\frac{\beta_2}{\sigma(J)}\sum_{j=1}^{n-1}\|e_{v,j}\|_2^2
\end{equation}
where the last equality holds since $\dot{Q}_{e,d}=\textbf{0}_{(n-1)\times 2}$ as defined in Problem~\ref{prob:1}.
\begin{remark}
    The bound on $X_j$ holds for all $j\in\{1,\hdots,n-1\}$. Therefore, by constructing the error state for all edges as a single vector, the bound in \eqref{eq:bound} can potentially be used to show the ultimate boundedness of the proposed controller by invoking Theorem 4.18 in \cite{Khalilbook}. On the other hand, observe that the system has $n+1$ degrees of freedom and $2n$ degrees of actuation without the sufficient condition in Proposition~\ref{prop:suff}. Therefore, we can always find a solution to the centralized version of the control $F$ via feedback linearization.
\end{remark}

\section{Results}
\label{sec:results}
In this section we present simulated results from applying the proposed controller \eqref{eq:feedback_control} on a two-link SAR system (Fig.~\ref{fig:2-link_diagram}), and modified feedback controller to track time-varying desired edge trajectory for a five-link SAR system (Fig.~\ref{fig:5-link_diagram}). The resulting plots in Fig.~\ref{fig:2-link_plot} and Fig.~\ref{fig:5-link_plot} are arranged like the coordinates of $Q_e$. 


\subsection{Two-link SAR system}
\begin{figure}[t]
    \centering
    \begin{subfigure}[t]{0.40\linewidth}
        \centering
        \begin{tikzpicture}
            \draw[->, line width=0.5mm] (0, 0) -- (-1, -1);
            \draw[->, line width=0.5mm] (0, 0) -- (1, -1);
            \node at (0.8, -0.3) {$r_{e2}$};
            \node at (-0.8, -0.3) {$r_{e1}$};

            \draw[->, blue, line width=0.3mm] (-1.05, -1.05) -- (-1.55, -0.55) node[anchor=east] {$f_2$};
            \draw[->, blue, line width=0.3mm] (1.05, -1.05) -- (1.55, -0.55) node[anchor=west] {$f_3$};

            \filldraw[fill=white, draw=red, line width=0.6mm] (0, 0) circle (1mm);
            \fill[blue] (-1.05, -1.05) circle (1mm);
            \fill[blue] (1.05, -1.05) circle (1mm);
            \node at (0, 0.3) {$m_1$};
            \node at (-1.05, -1.35) {$m_2$};
            \node at (1.05, -1.35) {$m_3$};
    
            \draw[->] (-2, 0.25) node[anchor=north east] {\small $\mathcal{I}$} -- (-1.5, 0.25) node[anchor=north] {\small $x$};
            \draw[->] (-2, 0.25) -- (-2, 0.75) node[anchor=east] {\small $y$};
        \end{tikzpicture}
        \caption{}
        \label{fig:2-link_diagram}
    \end{subfigure}
    \hfill
    \begin{subfigure}[t]{0.40\linewidth}
        \centering
        \begin{tikzpicture}[node distance={12mm}, thick, main/.style = {draw, circle}]
            \tikzset{edge/.style = {->,> = latex'}};
            \node[main] (1) {$1$};
            \node[main] (2) [below left of=1] {$2$};
            \node[main] (3) [below right of=1] {$3$};
            \draw[->] (1) -- node[anchor=south east] {$1$} (2);
            \draw[->] (1) -- node[anchor=south west] {$2$} (3);
        \end{tikzpicture}
        \caption{}
        \label{fig:2-link_graph}
    \end{subfigure}
    \vfill
    \begin{subfigure}[t]{\linewidth}
        \centering
        \includegraphics[width=\linewidth]{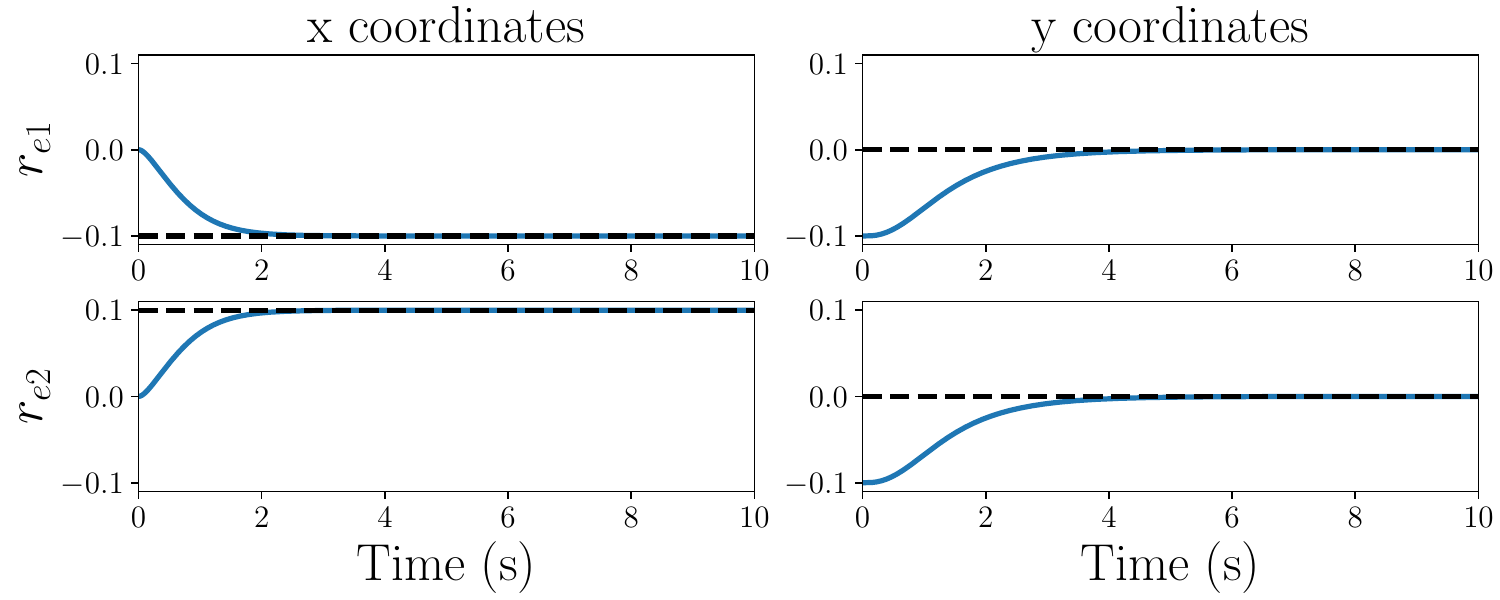}
        \caption{}
        \label{fig:2-link_plot}
    \end{subfigure}
    \caption{Two-link SAR system example. (a) System model with leader (red/white), followers (blue), and applied forces. (b) Arborescence $\mathcal{G}$ diagram. (c) Tracking results for $Q_e$ (blue) to the desired $Q_{e,d}$ (black/dashed).}
\end{figure}


Consider the SAR system shown in Fig.~\ref{fig:2-link_diagram}, which can be viewed as a simplified model of a forward-facing bird in planar space. Each mass can exert a force as defined in Section~\ref{sec:forces} to control both the global position and relative configuration of the system. The underlying directed graph $\mathcal{G}$ is shown in Fig.~\ref{fig:2-link_graph}. 
The system parameters are as follows: $m_1=0.7,~m_2=m_3=0.2,~\ell_1=\ell_2=0.1$.
Control gains are $k_c=k_v=10$. The desired edge coordinates and leader force are chosen as follows: $r_{e1, d} = [-l_1, 0]^\top$ and $r_{e2,d}=[l_2, 0]^\top$ with $f_l(t)=[0,0.5\cos(\pi t)]^\top$.
The decentralized follower controller in \eqref{eq:feedback_control} is applied to $f_2$ and $f_3$, and the results shown in Fig.~\ref{fig:2-link_plot} illustrate that the edge coordinates approach and stay near a constant desired setpoint.
See an animation at (\url{https://youtu.be/2IuMvJKZ8vw}).

Using a graph-based approach, the model can be easily extended to a more detailed representation of a bird's anatomy by adding more links to the network structure without changing the general structure of the dynamics.
Also, the complexity of the control algorithm scales linearly as more links are added due to its distributed nature, as opposed to traditional approaches.



\subsection{Five-link SAR system}

\begin{figure}[htbp]
    \centering
    \begin{subfigure}[t]{0.45\linewidth}
        \centering
        \begin{tikzpicture}
            \draw[->, line width=0.5mm] (0, 1.05) -- (1, 2.05);
            \draw[->, line width=0.5mm] (0, 1.05) -- (-1, 2.05);
            \draw[->, line width=0.5mm] (0, 0) -- (-1, -1);
            \draw[->, line width=0.5mm] (0, 0) -- (1, -1);
            \draw[->, line width=0.5mm] (0, 0) -- node[anchor=east] {$r_{e3}$} (0, 1);
            \node at (-0.8, -0.3) {$r_{e1}$};
            \node at (0.3, -0.7) {$r_{e2}$};
            \node at (-0.75, 1.35) {$r_{e4}$};
            \node at (0.9, 1.45) {$r_{e5}$};
    
            \draw[->, red, line width=0.3mm] (0, 0) -- (0, 0.5) node[anchor=west] {$f_l$};
            \draw[->, blue, line width=0.3mm] (-1.05, -1.05) -- (-1.5, -0.3) node[anchor=east] {$f_2$};
            \draw[->, blue, line width=0.3mm] (1.05, -1.05) -- (1.5, -0.3) node[anchor=west] {$f_3$};
            \draw[->, blue, line width=0.3mm] (0, 1.05) -- (0, 1.55) node[anchor=south] {$f_4$};
            \draw[->, blue, line width=0.3mm] (-1.05, 2.1) -- (-1.5, 2) node[anchor=east] {$f_5$};
            \draw[->, blue, line width=0.3mm] (1.05, 2.1) -- (1.5, 2) node[anchor=west] {$f_6$};
            
            \filldraw[fill=white, draw=red, line width=0.6mm] (0, 0) circle (1mm);
            \fill[blue] (1.05, 2.1) circle (1mm);
            \fill[blue] (-1.05, 2.1) circle (1mm);
            \fill[blue] (-1.05, -1.05) circle (1mm);
            \fill[blue] (1.05, -1.05) circle (1mm);
            \fill[blue] (0, 1.05) circle (1mm);
            \node at (0.4, 0.05) {$m_1$};
            \node at (-1.05, -1.35) {$m_2$};
            \node at (1.05, -1.35) {$m_3$};
            \node at (0.4, 0.95) {$m_4$};
            \node at (-1.05, 2.35) {$m_5$};
            \node at (1.05, 2.35) {$m_6$};
    
            \draw[->] (-2, 0.5) node[anchor=north east] {\small $\mathcal{I}$} -- (-1.5, 0.5) node[anchor=north] {\small $x$};
            \draw[->] (-2, 0.5) -- (-2, 1) node[anchor=east] {\small $y$};
        \end{tikzpicture}
        \caption{}
        \label{fig:5-link_diagram}
    \end{subfigure}
    \hfill
    \begin{subfigure}[t]{0.45\linewidth}
        \centering
        \begin{tikzpicture}[node distance={12mm}, thick, main/.style = {draw, circle}]
            \tikzset{edge/.style = {->,> = latex'}};
            \node[main] (1) {$1$};
            \node[main] (2) [below left of=1] {$2$};
            \node[main] (3) [below right of=1] {$3$};
            \node[main] (4) [above of=1] {$4$};
            \node[main] (5) [above left of=4] {$5$};
            \node[main] (6) [above right of=4] {$6$};
            \draw[->] (1) -- node[anchor=south east] {$1$} (2);
            \draw[->] (1) -- node[anchor=south west] {$2$} (3);
            \draw[->] (1) -- node[anchor=east] {$3$} (4);
            \draw[->] (4) -- node[anchor=north east] {$4$} (5);
            \draw[->] (4) -- node[anchor=north west] {$5$} (6);
        \end{tikzpicture}
        \caption{}
        \label{fig:5-link_graph}
    \end{subfigure}
    \vfill\begin{subfigure}[h]{\linewidth}
        \centering
        \includegraphics[width=\linewidth]{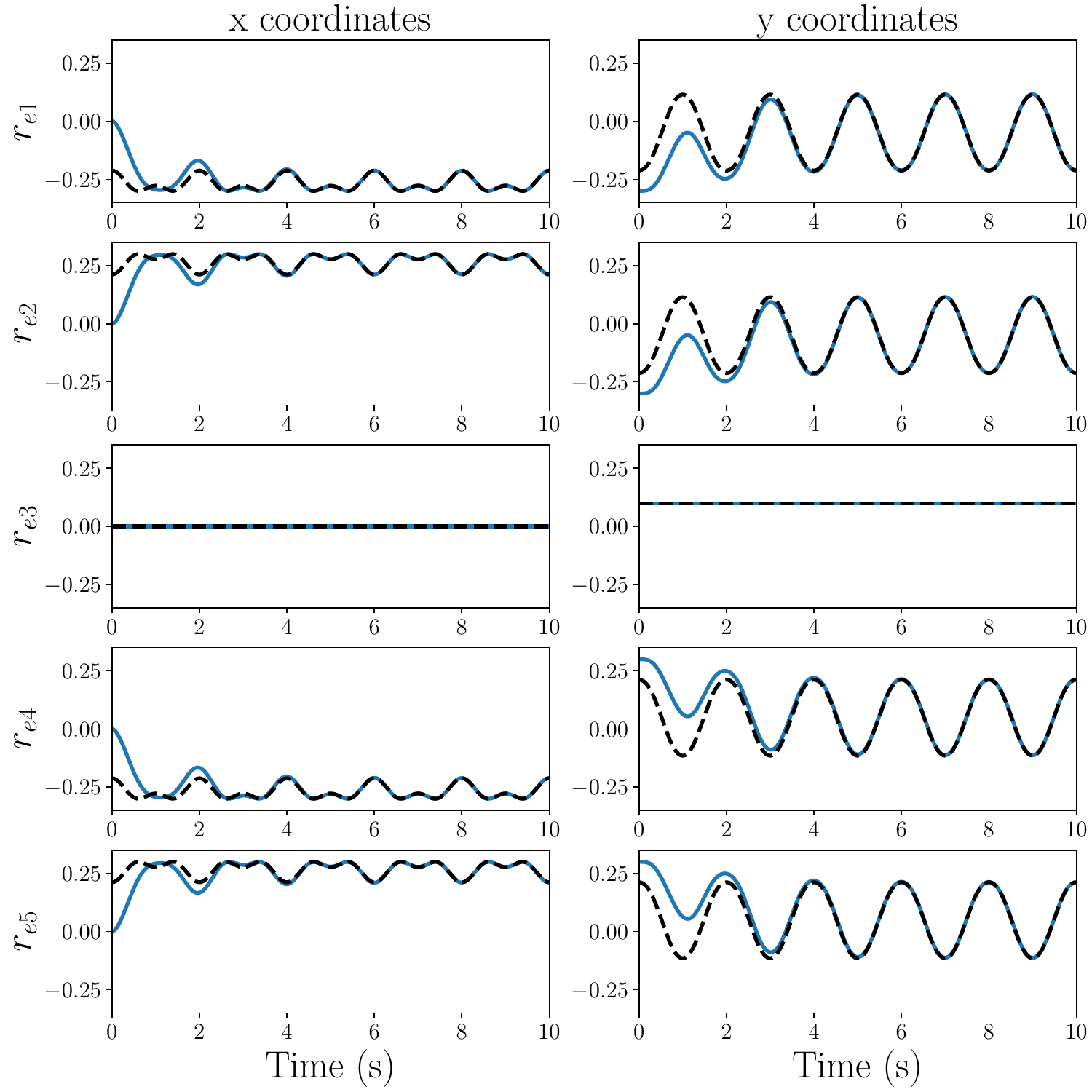}
        \caption{}
        \label{fig:5-link_plot}
    \end{subfigure}
    \caption{Five-link SAR system example. (a) System model with leader (red/white), followers (blue), and applied forces. (b) Arborescence $\mathcal{G}$ diagram. (c) Tracking results. }
\end{figure}
Consider the SAR system shown in Fig.~\ref{fig:5-link_diagram}, and the underlying graph $\mathcal{G}$
shown in Fig.~\ref{fig:5-link_graph}. 
The incidence matrix $D(\mathcal{G})$ and its left inverse $H(\mathcal{G})$ are given by
$$\begin{bmatrix}
    -1 & -1 & -1 & 0 & 0\\
    1 & 0 & 0 & 0 & 0\\
    0 & 1 & 0 & 0 & 0\\
    0 & 0 & 1 & -1 & -1\\
    0 & 0 & 0 & 1 & 0\\
    0 & 0 & 0 & 0 & 1
\end{bmatrix},~\begin{bmatrix}
    0 & 1 & 0 & 0 & 0 & 0\\
    0 & 0 & 1 & 0 & 0 & 0\\
    0 & 0 & 0 & 1 & 1 & 1\\
    0 & 0 & 0 & 0 & 1 & 0\\
    0 & 0 & 0 & 0 & 0 & 1
\end{bmatrix}$$
respectively. The system parameters are defined as$$(m_1,m_2,m_3,m_4,m_5,m_6)=(0.7,0.2,0.2,0.5,0.1,0.1)$$ and $\ell_j=0.3$ for all $j$. 
Control gains are chosen to be $10$ for both $k_c$ and $k_v$,
and $Q_{e,d}(t)$ has a sinusoidal trajectory for the outer links and leader force, which creates a ``flapping'' motion, and a constant trajectory for the center link with $\theta(t) = \frac{3\pi}{16}\cos(\pi t) + \frac{\pi}{16}$ given by
$$Q_{e,d}(t) = \begin{bmatrix}
    -\ell_1\cos\theta(t) & -\ell_1\sin\theta(t)\\
    \ell_2\cos\theta(t) & -\ell_2\sin\theta(t)\\
    0 & \ell_3\\
    -\ell_4\cos\theta(t) & \ell_4\sin\theta(t)\\
    \ell_5\cos\theta(t) & \ell_5\sin\theta(t)
\end{bmatrix}.$$
Note that $\dot{Q}_{e,d}(t),~\ddot{Q}_{e,d}(t)\neq0$, which violates the condition in Problem~\ref{prob:1}. Therefore, we adjust the controller in \eqref{eq:final_control_u} to compensate for the desired acceleration where
\begin{equation}
\label{eq:proposed_future_control}
    u_j=P_j(-k_ce_{c,j}-k_ve_{v,j}+\ddot{r}_{ej,d}).
\end{equation}
The leader node force $f_l$ is chosen to be
$f_l(t) = [0, \sin(2\pi t)]^\top$.
Initial conditions on the edge coordinates are such that each link is aligned vertically, in other words the $x$ component of each $r_{ej}(t)$ is initially zero, and the initial error for $e_{c,3}$ is also zero. The results shown in Fig.~\ref{fig:5-link_plot} illustrate that the edge coordinates approach and stay near a desired trajectory with nonzero velocity and acceleration with \eqref{eq:proposed_future_control} (animation: \url{https://youtu.be/Ba_99zDBZDQ}).
\begin{remark}
The value of the smallest eigenvalue of $J$ during the simulation in Fig.~\ref{fig:5-link_plot} was at a minimum at $t=0$. Thus, the case where all $r_{ej}$'s  are aligned potentially pertains to a lower bound on the smallest eigenvalue of $J$ \cite{fiedler1983note}.
\end{remark} 


\section{Conclusion}
We developed a generalized underlying graph to represent network connections in SAR systems with holonomic constraints and derived the dynamic equations in terms of graph matrices, expressed in both absolute and relative spatial coordinates. Furthermore, we derived a decentralized control strategy to control the relative coordinates towards desired orientations. Simulated results demonstrate the potential stability and effectiveness of the proposed controller. The results in this paper were achieved for SAR systems on a plane, but can be generalized in several ways, such as for robots in spatial space and torque inputs on the joints. The future work of this paper includes theoretical extensions to prove a stability condition of the decentralized controller, generalization to time-varying trajectory tracking, and avoiding inter-component collisions.


\bibliographystyle{IEEEtran}
\bibliography{refs}

\end{document}